\newtheorem{theorem}{Theorem}
\newtheorem{definition}{Definition}
\newtheorem{lemma}{Lemma}
\begin{document}
%
\title{DCG: Distributed Conjugate Gradient for Efficient Linear Equations Solving}
%
%
%
%

\author{Haodi~Ping,  Yongcai~Wang, 
        and~Deying~Li
\IEEEcompsocitemizethanks{\IEEEcompsocthanksitem The authors are with School of Information, Renmin University of China, Beijing, P.R.China, 100872.\protect\\}
\thanks{E-mail: \{haodi.ping, ycw, deyingli\}@ruc.edu.cn}
}

\IEEEtitleabstractindextext{%
\begin{abstract}
Distributed algorithms to solve linear equations in multi-agent networks have attracted great research attention and many iteration-based distributed algorithms have been developed. The convergence speed is a key factor to be considered for distributed algorithms, and it is shown dependent on the spectral radius of the iteration matrix. However, the iteration matrix is determined by the network structure and is hardly pre-tuned, making the iterative-based distributed algorithms may converge very slowly when the spectral radius is close to 1. In contrast, in centralized optimization, the Conjugate Gradient (CG) is a widely adopted idea to speed up the convergence of the centralized solvers, which can guarantee convergence in fixed steps.  In this paper, we propose a general distributed implementation of CG, called DCG. DCG only needs local communication and local computation, while inheriting the characteristic of fast convergence. DCG guarantees to converge in $4Hn$ rounds, where $H$ is the maximum hop number of the network and $n$ is the number of nodes. We present the applications of DCG in solving the least square problem and network localization problem. The results show the convergence speed of DCG is three orders of magnitude faster than the widely used Richardson iteration method.
\end{abstract}

\begin{IEEEkeywords}
distributed algorithm, conjugate gradient, linear equations, network localization, least square problem
\end{IEEEkeywords}}

\maketitle

\IEEEdisplaynontitleabstractindextext

%
\IEEEpeerreviewmaketitle

\section{Introduction}

%
In many multi-agent applications, the underlying problem can be reduced to solving a system of linear equations\cite{mou2015distributed}. 
Because the autonomous networked agents are usually discretely deployed, each agent only has the access to communicate with direct neighbors. Moreover, in certain scenarios, each agent only desires its own state. Such characteristics consequently give rise to distributed solvers for linear systems.  Different from the centralized solvers, distributed solvers usually adopt iterative manners. The key feature of iterative approaches is \emph{linear iterations}, where each agent receives states of direct neighbors; updates and sends its own state. The barycentric linear localization algorithm\cite{ECHO}, is a typical iteration-based method.  

The convergence speed is a crucial factor of the iteration-based distributed algorithms, which determines whether the distributed algorithm can be used when the application requires a fast response. 
However, the convergence rate of many iterative methods, e.g., Jacobi iteration, Gauss-Seidel iteration, and Richardson iteration\cite{Wolfgang2016}, are characterized by the spectral radius of the iteration matrix. However, the spectral radius is determined by the network topology and is difficult to be pre-tuned. Thus the convergence speed is highly uncertain and may be very slow in some network states.

\begin{figure}
	\centering
	\subfigure[The Richardson Iteration]{
		\begin{minipage}[T]{0.46\linewidth}
			\centering
			\includegraphics[width=1\textwidth]{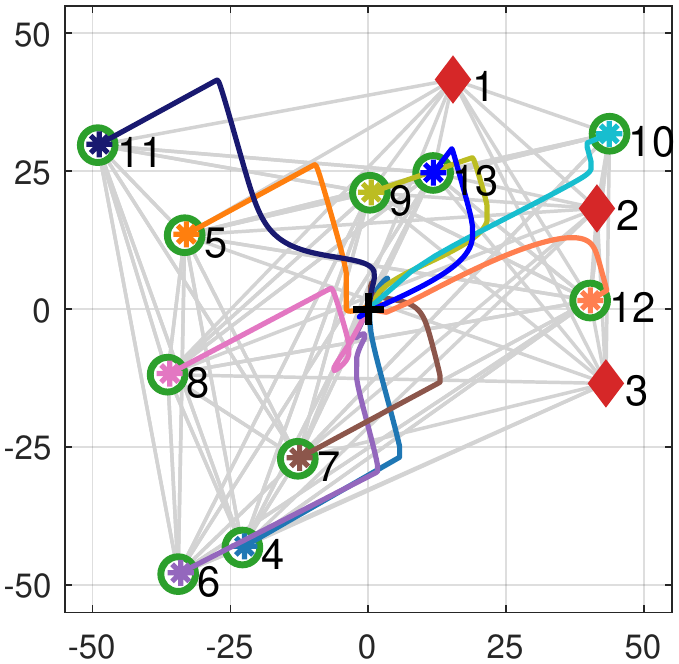}
			\label{fig:trace_richardson_2d}
		\end{minipage}
	}
	\vspace{-0.2cm}
	\subfigure[DCG]{
		\begin{minipage}[T]{0.46\linewidth}
			\centering
			\includegraphics[width=1\textwidth]{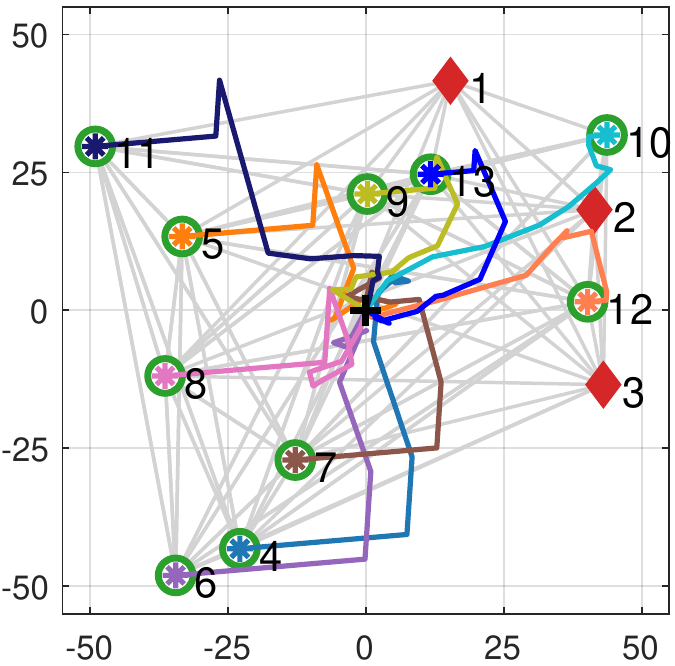}
			\label{fig:trace_dcg_2d}
		\end{minipage}
	}
	\caption{The convergence trails in $\mathbb{R}^2$. The marker `+' represents the start a trail. The marker `$*$' and `$\circ$' represent the converged estimation $\mathbf{\hat{x}}_i$ and the ground truth  $\mathbf{x}_i$, respectively. }
	\label{fig:show_iter_2d}
\end{figure}

\begin{figure}		
	\centering	
	\includegraphics[width=.99\linewidth]{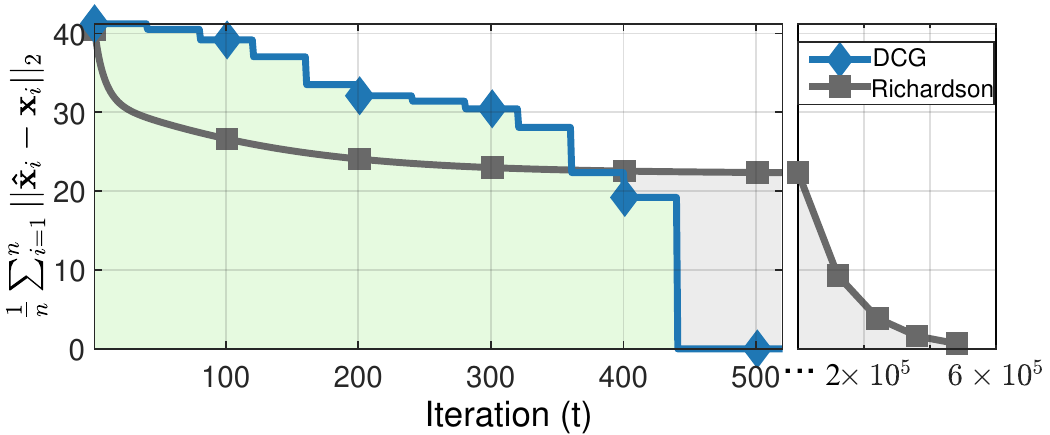}
	\caption{The mean square error w.r.t. iteration rounds.}
	\label{fig:effectiveness_dcg_2d}
\end{figure}

To guarantee fast convergence of distributed algorithms in solving linear equations, this study explores  the idea of  Conjugate Gradient (CG)\cite{aggarwal2020linear}. CG has two desired properties for solving the linear systems: 1) CG converges to the exact solution after a finite number of iterations, which is not larger than the size of the system	matrix; 2)  CG is suited for solving linear systems with large and sparse system matrices.		
But CG is essentially a centralized solver. In pursuit of distributed implementation, we design an efficient protocol to synchronize the necessary vectors to update the residual. Our distributed CG, i.e., DCG remarkably speeds up convergence by paying limited neighborhood communication costs. 

The rest of the paper is organized as follows. We formulate the problem and present related algorithms in Section~\ref{sect:preliminaries}. DCG is proposed in Section~\ref{sect:dcg}.  Applications of DCG are discussed in Section~\ref{sect:applications}. DCG is evaluated in Section~\ref{sect:evaluation}. The paper is concluded with further discussions in Section~\ref{sect:conclusion}. 

\noindent \textbf{Notations:} Throughout the paper, let $\mathbf{AX=b}$ denote a system of linear equations, where $\mathbf{A}\in\mathbb{R}^{n\times n}$ is the \emph{coefficient matrix}, $\mathbf{b}\in\mathbb{R}^{n\times d}$ is the \emph{right-hand side} vector, and  $\mathbf{X}\in\mathbb{R}^{n\times d}$ is the vector of unknowns.  $n$ is the number of variables and $d$ is the spatial dimension of a vector, $d\in\{2,3\}$. The vector $\mathbf{A}_{i,:} = [A_{i1} \cdots A_{in}]$ denotes the $i$th row of $\mathbf{A}$, $i\in\{1\cdots n\}$.

\section{Preliminaries}
\label{sect:preliminaries}
\subsection{Problem Formulation}

Consider a network of $n$ agents $\mathcal{V} = \{v_1,\cdots, v_n\}$,  each agent $v_i$ is capable of communicating with the agents within its reception range $R$. Let $\mathcal{E}$ be the set of edges and $(i,j)\in \mathcal{E}$ if the distance between $v_i$ and $v_j$ is not larger than $R$. Then the multi-agent network can be represented as a graph $\mathcal{G=(V,E)}$. The neighbors of $v_i$ is denoted by $\mathcal{N}_i$, $j\in\mathcal{N}_i$ if $(i,j)\in\mathcal{E}$.

\noindent \textbf{Problem:} Assume that $\mathbf{A}$ is non-singular. Let $\mathbf{X}^{*}$ denote the unique solution satisfying $\mathbf{AX=b}$. Suppose each agent $v_i$ holds a state vector $\mathbf{\hat x}_i\in\mathbb{R}^d$. Initially, $v_i$ knows $\mathbf{\mathbf{A}}_{i,:}$ and $\mathbf{\mathbf{b}}_{i,:}$. The problem is to devise a local rule for each agent to update its state $\mathbf{\hat x}_i$ leveraging the local communication with agents in $\mathcal{N}_i$ so that $\mathbf{\hat x}_i(t)$ converges to $\mathbf{x}^{*}_i$ within finite $t$. 

\subsection{Related Work}
The basic idea of the iterative methods is as follows. Given an initialization $\mathbf{\hat x}(0)$, generate an iteration sequence  $\{\mathbf{\hat x}(t)\}_{t=0}^{\infty}$ in a certain manner, so that:

\begin{equation}
	\label{equ:iteration_sequence}
	\lim\limits_{t \to\infty}\mathbf{\hat x}(t) = \mathbf{x}^{*}\triangleq\mathbf{A}^{-1}\mathbf{b}.
\end{equation}

Generally, the state update of an iterative method can be represented as:
\begin{equation}
	\mathbf{\hat x}(t+1) = \phi_k(\mathbf{\hat x}(t),\mathbf{\hat x}(t-1)\cdots,\mathbf{\hat x}(0),\mathbf{A},\mathbf{b}), 
\end{equation}
where $\mathbf{\hat x}(0)=\phi_0(\mathbf{A},\mathbf{b)}$ or $\mathbf{\hat x}(0)$ is selected manually. $\phi_k$ is called the \emph{iteration function}. The specific design of iterative functions are based on the \emph{matrix splitting}.
\begin{definition}[Matrix Splitting]
	Suppose a  non-singular matrix  $\mathbf{A}\in\mathbb{R}^{n\times n}$, a split of matrix $\mathbf{A}$ is defined  as $\mathbf{A} = \mathbf{M} - \mathbf{N}$, where $\mathbf{M}$ is also non-singular.
\end{definition}

Consider a general linear system:
\begin{equation}
	\label{equ:linear_system}
	\mathbf{AX=b},
\end{equation}
where $\mathbf{A}$ is non-singular. (\ref{equ:linear_system}) can be transformed as follows through matrix splitting $\mathbf{A} =\mathbf{M} - \mathbf{N}$.
\begin{equation}
	\label{equ:linear_split}
	\mathbf{M{X}=N{X}+b}. 
\end{equation}
Then, we can construct an iteration function as:
\begin{equation}
	\label{equ:iteration_function}
	\mathbf{M\hat{X}}(t+1)=\mathbf{N\hat{X}}(t)+\mathbf{b}, 
\end{equation}
which is equivalent to:
\begin{equation}
	\label{equ:iteration_function_transform}
	\mathbf{\hat{X}}(t+1)=\mathbf{M}^{-1}\mathbf{N\hat{X}}(t)+\mathbf{M}^{-1}\mathbf{b} \triangleq \mathbf{G\hat{X}}(t) + \mathbf{g}.
\end{equation}
$\mathbf{G}=\mathbf{M}^{-1}\mathbf{b}$ is called the \emph{iteration matrix}.
It is straightforward that different iterative methods can be constructed by varying $\mathbf{M}$.

In Jacobi iteration, $\mathbf{A}$ is splitted as:

\begin{equation}
	\mathbf{A=D-L-U}.
\end{equation}
$\mathbf{D}$ is the diagonal component of $\mathbf{A}$. $-\mathbf{L}$ and $-\mathbf{U}$ are the upper triangle component and the lower triangle component of $\mathbf{A}$, respectively. Then, the Jacobi iteration is  specified as:

\begin{equation}
	\mathbf{\hat{X}}(t+1) = \mathbf{D}^{-1}(\mathbf{L}+\mathbf{U})\mathbf{\hat{X}}(t) + \mathbf{D}^{-1}\mathbf{b}.
\end{equation}

From the local behavior of an individual agent $v_i$, the state update is:

\begin{equation}
	\mathbf{\hat x}_i(t+1) = \frac{1}{A_{ii}}\left(\mathbf{b}_i-\sum\limits_{j=1,j\neq i}^n A_{ij}\mathbf{\hat x}_j(t)\right)
\end{equation}

In Gauss-Seidel iteration, the iteration function is designed as:

\begin{equation}
	\mathbf{\hat X}(t+1) = (\mathbf{D-L})^{-1}\mathbf{U}\mathbf{\hat X}(t) + (\mathbf{D-L})^{-1}\mathbf{b}.
\end{equation}

The local update of agent $v_i$'s state is:
\begin{equation}
	\mathbf{\hat x}_i(t+1) = \frac{1}{A_{ii}}\left(\mathbf{b}_i-\sum\limits_{j=1}^{i-1} A_{ij}\mathbf{\hat x}_j(t+1) - \sum\limits_{j=i+1}^{n}A_{ij}\mathbf{\hat x}_j(t) \right)
\end{equation}

Another brief iteration method is the Richardson iteration when $\mathbf{A}$ is symmetric positive definite. The iteration function is:

\begin{equation}
	\mathbf{\hat X}(t+1) = \mathbf{\hat X}(t)+ \omega(\mathbf{b}-\mathbf{A}\mathbf{\hat X}(t)).
\end{equation}

The state of agent $v_i$ is updated as:

\begin{equation}
	\mathbf{\hat x}_i(t+1) = \mathbf{\hat x}_i(t)+ \omega\left(\mathbf{b}_i- \sum\limits_{j=1}^{n}\mathbf{A}_{ij}\mathbf{\hat x}_j(t)\right), 
\end{equation}
where $\omega$ is  a non-negative scalar and is suggested to be $\frac{2}{\lambda_{max}+\lambda_{min}}$.	Similar methods include the Successive Over-Relaxation (SOR) iteration, the Symmetric SOR (SSOR) iteration, the Accelerated OR (AOR) iteration, the Symmetric AOR (SAOR) etc \cite{barrett1994templates}. 

\subsection{Convergence and Convergence Rate}
Since the aforementioned methods are iterative, a  crucial issue is to guarantee iteration convergence. For a general iteration function as in (\ref{equ:iteration_function_transform}), its convergence is guaranteed by the following theorem.

\begin{theorem}
	\label{theorem:correct_convergence}
	The iterates formulated by $\mathbf{\hat X}(t+1)=\mathbf{G}\mathbf{\hat X}(t)+\mathbf{b}$ converges for any $\mathbf{\hat X}(0)$, if and only if $\rho(\mathbf{G})<1$\cite{saad2003iterative}.
\end{theorem}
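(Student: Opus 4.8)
The plan is to reduce the convergence question to the asymptotic behavior of the matrix powers $\mathbf{G}^{t}$, and then to invoke a standard spectral fact about those powers.

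First I would isolate a fixed point. If the iteration converges for every $\mathbf{\hat X}(0)$, then the limit $\mathbf{x}^{\infty}$ necessarily satisfies $\mathbf{x}^{\infty}=\mathbf{G}\mathbf{x}^{\infty}+\mathbf{b}$; conversely, when $\rho(\mathbf{G})<1$ the value $1$ is not an eigenvalue of $\mathbf{G}$, so $\mathbf{I}-\mathbf{G}$ is nonsingular and a unique fixed point $\mathbf{x}^{*}=(\mathbf{I}-\mathbf{G})^{-1}\mathbf{b}$ exists. In either case, set $\mathbf{e}(t)=\mathbf{\hat X}(t)-\mathbf{x}^{*}$. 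Subtracting the fixed-point identity from the update rule gives $\mathbf{e}(t+1)=\mathbf{G}\mathbf{e}(t)$, hence $\mathbf{e}(t)=\mathbf{G}^{t}\mathbf{e}(0)$. Thus the iteration converges for every $\mathbf{\hat X}(0)$ if and only if $\mathbf{G}^{t}\mathbf{e}(0)\to\mathbf{0}$ for every $\mathbf{e}(0)$, which, letting $\mathbf{e}(0)$ range over a basis, is equivalent to $\mathbf{G}^{t}\to\mathbf{0}$.

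Next I would prove the linear-algebra equivalence $\mathbf{G}^{t}\to\mathbf{0}\iff\rho(\mathbf{G})<1$. For necessity, if $\lambda$ is an eigenvalue with $|\lambda|\ge 1$ and $\mathbf{v}\neq\mathbf{0}$ a corresponding eigenvector, then $\mathbf{G}^{t}\mathbf{v}=\lambda^{t}\mathbf{v}$ has norm $|\lambda|^{t}\|\mathbf{v}\|\not\to 0$, so $\mathbf{G}^{t}\not\to\mathbf{0}$. For sufficiency, I would use the fact that for every $\varepsilon>0$ there is an induced matrix norm $\|\cdot\|_{\varepsilon}$ with $\|\mathbf{G}\|_{\varepsilon}\le\rho(\mathbf{G})+\varepsilon$, obtained by conjugating a fixed norm with the similarity that brings $\mathbf{G}$ to a Jordan form whose off-diagonal entries are scaled to be arbitrarily small; choosing $\varepsilon$ with $\rho(\mathbf{G})+\varepsilon<1$ gives $\|\mathbf{G}^{t}\|_{\varepsilon}\le\|\mathbf{G}\|_{\varepsilon}^{t}\to 0$, and since all norms on a finite-dimensional space are equivalent, $\mathbf{G}^{t}\to\mathbf{0}$. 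An equivalent route is to write $\mathbf{G}=\mathbf{S}\mathbf{J}\mathbf{S}^{-1}$ in Jordan form and bound each Jordan-block power entrywise by terms of the form $\binom{t}{k}|\lambda|^{t-k}$, which tend to $0$ exactly when $|\lambda|<1$.

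The main obstacle is the sufficiency half of the spectral lemma: the implication $\rho(\mathbf{G})<1\Rightarrow\mathbf{G}^{t}\to\mathbf{0}$ is not immediate, since $\rho(\mathbf{G})$ need not equal any particular operator norm of $\mathbf{G}$ and a non-normal $\mathbf{G}$ can have transiently growing powers. This is precisely where the "$\varepsilon$-close norm" construction or the explicit Jordan-block estimate is needed; the remaining ingredients (the error recursion and the eigenvector argument for necessity) are routine. In the write-up I would state the error recursion in full and cite \cite{saad2003iterative} for the spectral lemma rather than reproducing the Jordan-form bookkeeping.
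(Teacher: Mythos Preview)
Your argument is correct and is the standard textbook proof of this fact. Note, however, that the paper itself does not give a proof at all: it simply states the theorem and refers the reader to Theorem~4.1 of \cite{saad2003iterative}. So there is nothing to compare against beyond the citation; your error-recursion reduction to $\mathbf{G}^{t}\to\mathbf{0}$ together with the spectral characterization via an $\varepsilon$-close induced norm (or, equivalently, Jordan-block estimates) is precisely the content of that reference, and your proposal already anticipates citing it for the spectral lemma.
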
 
$\rho(\mathbf{G})$ is the spectral radius of the iteration matrix $\mathbf{G}$. See Theorem 4.1 of \cite{saad2003iterative} for the proof. 

Apart from knowing when the iteration converges, it is also desirable to explore how fast it converges. Saad \cite{saad2003iterative} presented that the convergence rate $\tau$ is the natural logarithm of the inverse of the spectral radius:

\begin{equation}
	\tau=\ln\frac{1}{\rho(\mathbf{G})} = -\ln\rho(\mathbf{G}).
\end{equation}

It can be seen that these iterative methods converge slowly when $\mathbf{G}$ has unstable eigenvalues.

\section{DCG: A General Distributed Conjugate Gradient Implementation}
\label{sect:dcg}
The most attractive feature of CG is that it converges within a fixed round of iterations. But CG is essentially a centralized gradient-based solver for linear equations. See Section 6 of \cite{saad2003iterative} for the original CG algorithm. 
Although parallel or distributed CG algorithms have been reported for allocating computational loads in cloud computing\cite{ismail2013implementation} and estimating spectrum in sensor networks\cite{xu2016distributed}, the general distributed CG has not been well touched. 

The key difficulty in implementing Distributed CG (DCG) is that the state $\widehat{X}_i$ is updated based on several vectors, however, $v_i$ only knows the $i$th element of these vectors. Specifically, 
$\widehat{X}_i(t)=\widehat{X}_i(t-1)+\alpha_i(t){d}_i(t)$. 
$\alpha_i(t)$ and $d_i(t)$ are calculated using the direction vector $\mathbf{d}(t)$ and the residual vector $\mathbf{r}(t)$. But $v_i$ only knows $d_i$ and $r_i$. 
\subsection{Vector Synchronization}

To supplement the necessary information, we design the \emph{Synchronize\_Vector}($z_i$) protocol, through which $v_i$ can gather any complete vector $\mathbf{z}$ by constantly exchanging respective elements with neighbors (Line~\ref{dcg:syn_ini}-\ref{dcg:syn_ret} of Algorithm~\ref{alg:DCG}). $H$ is the largest number of hops throughout the network. So the synchronization can be finished in $H$ rounds. $H$ can be set to $n$ if it is not given at network deployment. The operation `\emph{merge}' (Line~\ref{dcg:syn_merge}) means selecting all non-zero elements of the input vectors and stacking them as a new vector while maintaining their original indexes.

\subsection{Distributed Conjugate Gradient}
\label{sect:general_DCG}
DCG is detailed as \textbf{Function DCG} (Line~\ref{dcg:ini}-\ref{dcg:ret} of Algorithm~\ref{alg:DCG}). At initialization (Line~\ref{dcg:ini}), the state $\widehat{X}_i(0)$ is set to 0. The direction vector component ${d}_i$ and the residual vector component ${r}_i$ are set to 0 and $-{b}_i$, respectively. At each iteration $t$, the local behavior of a node $v_i$ is as follows:
\begin{itemize}
	\item \textbf{Update Residual} (Line \ref{dcg:residual}). 
	The residual ${r}_i$ is updated by	${r}_i(t)=(\mathbf{\mathbf{A}\widehat{X}})_i(t-1)-{b}_i$, which is realized  by ${r}_i(t) = -b_i+\sum\nolimits_{j\in\mathcal{N}_i}\mathbf{A}_{ij}\widehat{X}_j(t-1)$. $\mathbf{\mathbf{A}}_{(i,:)}$ and $b_i$ are known to $v_i$ at the begining. $\widehat{X}_j(t-1)$ is the state of neighbor $v_j$ obtained by local communication. 
	\item \textbf{Check Residual} (Line \ref{dcg:terminate}). 
	CG theoretically completes after $n$ iterations \cite{Wolfgang2016}. However, due to the accumulated floating point rounding off errors, the residual and the direction gradually lose accuracy. Thus, DCG terminates by checking whether $\mathbf{r}(t)^T\mathbf{r}(t)<\varepsilon$. The threshold $\varepsilon$ is an empirical value based on accuracy requirement. 	$\mathbf{r}(t)$ is obtained by: $\mathbf{r}(t)=$  \emph{Synchronize\_Vector}($r_i(t)$). After knowing the synchronized $\mathbf{r}$, the squared residual is calculated as $\mathbf{r}(t)^T\mathbf{r}(t)=\sum_{i=1}^{n}r_i(t)^2$. 
	\item \textbf{Update Direction} (Line \ref{dcg:direction}). 
	$d_i$ needs another vector $\mathbf{r}(t-1)$ to be updated. It is  obtained by \emph{Synchronize\_Vector}($r_i(t-1)$). Then, $\mathbf{r}(t-1)^T\mathbf{r}(t-1)=\sum_{i=1}^{n}r_i(t-1)^2$ and $d_i$ can be calculated locally.
	\item \textbf{Update Step Size} (Line \ref{dcg:step}). 
	Represent the numerator and denominator of $\alpha_i$ as  $numer=\mathbf{d}(t)^T\mathbf{r}(t)$ and $denom=\mathbf{d}(t)^{T}\mathbf{\mathbf{A}}\mathbf{d}(t)$, respectively. 
	The direction vector $\mathbf{d}(t)$ is obtained by \emph{Synchronize\_Vector}($d_i(t)$). So the numerator can be known as $numer=\sum_{i=1}^{n}d_i(t)r_i(t)$. 
	
	An intermediate vector $\mathbf{T}(t)=[T_1(t)\cdots T_n(t)]^T$ is introduced to calculate $denom$. By communicating with neighbors, $v_i$ calculates ${T}_i(t)= \sum\nolimits_{j\in\mathcal{N}_i}\mathbf{A}_{ij}{d}_{j}(t)$. The complete $\mathbf{T}(t)$ is obtained by \emph{Synchronize\_Vector}($T_i(t)$). So $denom=\sum\nolimits_{i=1}^{n}d_i(t)T_i(t)$. Then $\alpha_i=-numer/denom$.
	\item \textbf{Update State}  (Line \ref{dcg:estimation}). 
	The state $\widehat{X}_i(t)$ updates a step $\alpha_i(t){d}_i(t)$ along its last state.	
\end{itemize}
Overall, all of the above DCG procedures are completely distributed and implemented through neighborhood message passing.

	\begin{algorithm}[h]
		\caption{\textbf{Distributed Conjugate Gradient (DCG) of $v_i$}}
		\label{alg:DCG} 
		\KwIn{$\mathbf{\Omega}_{(i,:)}$; $b_i$\;}
		\KwOut{$\mathbf{\hat x}_i$\;}
		
		{
			\lnl{dcg:ini} $\widehat{X}_i(0)\gets 0$; ${d}_i(0)\gets 0$;
			${r}_i(0)\gets -b_i$ \tcp{\small{Initialize}} 
			\tcc{\small{Update $\widehat{X}_i$ as in Section~\ref{sect:general_DCG}}}			
			\nl \For{iterations $t\in\{1,\cdots,t_{max}\}$}
			{
				\lnl{dcg:residual} ${r}_i(t)=(\mathbf{\Omega \widehat{X}}(t-1))_i-{b}_i$; \tcp{\small{residual}}
				\lnl{dcg:terminate} \If{$\mathbf{r}(t)^T\mathbf{r}(t)<\varepsilon$}
				{
					\nl\textbf{break} \tcp{\small{meets accuracy requirement}}
				}
				\lnl{dcg:direction} ${d}_i(t)$=$-{r}_i(t)$+$\frac{\mathbf{r}(t)^{T}\mathbf{r}(t)}{\mathbf{r}(t-1)^{T}\mathbf{r}(t-1)}$${d}_i(t-1)$ 
				$//$\texttt{\small{direction}}\\
				\lnl{dcg:step} $\alpha_i(t)=-\frac{\mathbf{d}(t)^{T}\mathbf{r}(t)}{\mathbf{d}(t)^{T}\mathbf{\Omega}\mathbf{d}(t)}$; \tcp{\small{step size}}   
				\lnl{dcg:estimation} $\widehat{X}_i(t)=\widehat{X}_i(t-1)+\alpha_i(t){d}_i(t)$; \tcp{\small{state}}
			}
		}
		\lnl{dcg:ret}\KwRet{$\widehat{X}_i(t)$.}
		\vspace{5pt}\\
		\textbf{Function Synchronize$\_$Vector}($z_i$)\\
		{
			\lnl{dcg:syn_ini} initialize $\mathbf{z}_i(0)$ $\gets$ $[\boldsymbol{0}_{(i-1)\times 1}~z_i~\boldsymbol{0}_{(n-i)\times 1}]$\;
			\nl \For {iterations $k\in\{0,\cdots,H\}$}
			{
				\nl exchange $\mathbf{z}_i(k)$ with $\mathcal{N}_i$\;
				\nl \For{each $v_j\in\mathcal{N}_i$}
				{
					\lnl{dcg:syn_merge} $\mathbf{z}_i(k+1)$  $\gets$ \textbf{merge}($\mathbf{z}_i(k)$, $\mathbf{z}_{j}(k)$)\;
				}
			}
			\lnl{dcg:syn_ret}\KwRet{$\mathbf{z}_i$}.
		}
		
	\end{algorithm}

\subsection{Analysis of DCG}

In each round, four vectors are obtained by synchronization, so it converges within $4Hn$ rounds when $\mathbf{A}$ is non-singular. In actual applications, there may exist measurement noise so that $\mathbf{A}$ and $\mathbf{b}$ are influenced and the constructed linear system may be unsolvable\cite{Xie2020}. Let $\overline{\mathbf{A}}$ and $\overline{\mathbf{b}}$ denote the noisy matrices as:
\begin{equation}
	\label{equ:noise}
	\overline{\mathbf{A}} = \mathbf{A}+\Delta\mathbf{A}, \overline{\mathbf{b}} = \mathbf{b}+\Delta\mathbf{b},
\end{equation}
where $\Delta\mathbf{A}$ and $\Delta\mathbf{b}$ are error matrices implying the noise. If the noisy matrix $\overline{\mathbf{A}}$ is still non-singular, DCG still converges to the neighborhood of $\mathbf{X}^*$.
\begin{lemma}
	For a noisy linear system $\overline{\mathbf{A}}\mathbf{X}=\mathbf{b}$, DCG converges if the error matrix satisfies:
	\begin{equation}
		\label{equ:noise_converge}
		||\Delta\mathbf{A}||<\lambda_{min}(\mathbf{A}).
	\end{equation}
\end{lemma}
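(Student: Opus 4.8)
The plan is to reduce the convergence claim to a pure matrix-perturbation statement. By the analysis preceding the lemma, DCG (Algorithm~\ref{alg:DCG}) terminates with the exact solution of the linear system it is run on within $4Hn$ rounds whenever that system's coefficient matrix is non-singular; the short recurrences used for the direction and the step size additionally assume the matrix is symmetric positive definite, which is the setting in which CG/DCG is defined. Hence, applying DCG to $\overline{\mathbf{A}}\mathbf{X}=\mathbf{b}$, it suffices to show that the hypothesis $\|\Delta\mathbf{A}\|<\lambda_{min}(\mathbf{A})$ (with $\|\cdot\|$ the spectral norm) forces $\overline{\mathbf{A}}=\mathbf{A}+\Delta\mathbf{A}$ to be non-singular, and in fact positive definite when $\Delta\mathbf{A}$ is symmetric.

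\textbf{Key steps.} First, recall that since $\mathbf{A}$ is symmetric positive definite its eigenvalues are real and bounded below by $\lambda_{min}(\mathbf{A})>0$. Then I would invoke the Courant--Fischer min--max characterization (equivalently Weyl's inequality): for a symmetric perturbation $\Delta\mathbf{A}$ every eigenvalue of $\overline{\mathbf{A}}$ differs from the corresponding eigenvalue of $\mathbf{A}$ by at most $\|\Delta\mathbf{A}\|$, so in particular
\begin{equation}
	\lambda_{min}(\overline{\mathbf{A}}) \;\ge\; \lambda_{min}(\mathbf{A}) - \|\Delta\mathbf{A}\| \;>\; 0 .
\end{equation}
Thus $\overline{\mathbf{A}}$ is symmetric positive definite, hence invertible, and DCG converges in $4Hn$ rounds to $\overline{\mathbf{X}}^{*}\triangleq\overline{\mathbf{A}}^{-1}\mathbf{b}$. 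Equivalently, and without assuming $\Delta\mathbf{A}$ symmetric, I would write $\overline{\mathbf{A}}=\mathbf{A}\,(\mathbf{I}+\mathbf{A}^{-1}\Delta\mathbf{A})$ and note that, $\mathbf{A}$ being SPD, $\|\mathbf{A}^{-1}\Delta\mathbf{A}\|\le\|\mathbf{A}^{-1}\|\,\|\Delta\mathbf{A}\|=\|\Delta\mathbf{A}\|/\lambda_{min}(\mathbf{A})<1$, so $\mathbf{I}+\mathbf{A}^{-1}\Delta\mathbf{A}$ is invertible by the Neumann series and therefore so is $\overline{\mathbf{A}}$.

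\textbf{Main obstacle.} The delicate point is matching the hypothesis to what DCG actually requires: the finite-termination Krylov argument needs only non-singularity, but the particular recurrences in Algorithm~\ref{alg:DCG} (the $\mathbf{r}(t)^{T}\mathbf{r}(t)/\mathbf{r}(t-1)^{T}\mathbf{r}(t-1)$ ratio and the $\mathbf{d}(t)^{T}\mathbf{r}(t)/\mathbf{d}(t)^{T}\overline{\mathbf{A}}\mathbf{d}(t)$ step size) are well posed precisely when $\overline{\mathbf{A}}$ is symmetric positive definite. So I would either restrict to symmetric noise $\Delta\mathbf{A}$, or argue that the application-level constructions of Section~\ref{sect:applications} produce $\overline{\mathbf{A}}$ as a Gram/Laplacian-type matrix for which symmetry is preserved. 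A secondary point, needed only if one wants the stronger ``converges to the neighborhood of $\mathbf{X}^{*}$'' phrasing, is to then add the standard linear-system perturbation estimate $\|\overline{\mathbf{X}}^{*}-\mathbf{X}^{*}\|\le\|\overline{\mathbf{A}}^{-1}\|\big(\|\Delta\mathbf{A}\|\,\|\mathbf{X}^{*}\|+\|\Delta\mathbf{b}\|\big)$, whose right-hand side is finite exactly because (\ref{equ:noise_converge}) keeps $\lambda_{min}(\overline{\mathbf{A}})$ bounded away from $0$; for the lemma as literally stated (convergence only) this last estimate is not required.
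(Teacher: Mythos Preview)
Your proposal is correct, and in fact your second route---factoring $\overline{\mathbf{A}}=\mathbf{A}(\mathbf{I}+\mathbf{A}^{-1}\Delta\mathbf{A})$ and concluding invertibility from $\|\mathbf{A}^{-1}\Delta\mathbf{A}\|\le\|\Delta\mathbf{A}\|/\lambda_{min}(\mathbf{A})<1$---is precisely the paper's own proof, essentially verbatim. Your additional Weyl/Courant--Fischer argument and your discussion of the SPD requirement for the CG recurrences go beyond what the paper provides (the paper stops at non-singularity and does not address the symmetry/definiteness issue you flag).
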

\begin{proof}
	From (\ref{equ:noise}), we can obtain:
	\begin{equation}
		\overline{\mathbf{A}} = \mathbf{A}(I+\mathbf{A}^{-1}\Delta\mathbf{A}).
	\end{equation}
	Since $||\mathbf{A}^{-1}\Delta\mathbf{A}||\leq||\mathbf{A}^{-1}||~||\Delta\mathbf{A}||=\frac{||\Delta\mathbf{A}||}{\lambda_{min}(\mathbf{A})}$, the matrix $\overline{\mathbf{A}}$ must be non-singular if (\ref{equ:noise_converge}) holds. Thus DCG can converge to the neighborhood of $\mathbf{X}^*$ shown as: $\mathbf{\overline X}^*=\overline{\mathbf{A}}^{-1}\overline{\mathbf{b}}$.
\end{proof}
\section{Applications of DCG}
\label{sect:applications}
In this section, we investigate applying DCG to two actual scenarios.

\subsection{The Least Square Problem}
Linear equations arising from the era of engineering are usually over-determined. A typical scenario is distributed parameter estimation, where the observation equations are as:

\begin{equation}
	A_i x_i = b_i+\delta_i,
\end{equation}
where $x_i$ is the desired parameter to be calculated and $\delta_i$ is the component implying the measurement noise of $b_i$. Due to the measurement noise, such linear equations usually do not have a solution that exactly meets the constraints. This problem can be formulated as follows. Suppose there is no solution to the linear system $\mathbf{AX=b}$ and $\mathbf{AA}^{T}$ is non-singular. Each agent $v_i$ only knows the $i$th row of $\mathbf{A}$ and $\mathbf{b}$, which are denoted by  $\mathbf{A}_{i,:}$ and $\mathbf{b}_{i,:}$, respectively. Design a distributed rule for each agent to update its state $\mathbf{x}_i$ so that $\mathbf{x}_i(t)$ converges to the unique solution of:
\begin{equation}
	\label{equ:least_square}
	\mathbf{A}^{T}\mathbf{AX}=\mathbf{A}^{T}\mathbf{b}.
\end{equation} 

Let $\boldsymbol{\Omega} =\mathbf{A}^{T}\mathbf{A}$ and $\boldsymbol{\beta} =\mathbf{A}^{T}\mathbf{b}$. 
To solve the least square problem as in (\ref{equ:least_square}), each agent should be aware of the $i$th row of $\boldsymbol{\Omega}$ and $\boldsymbol{\beta}$ using $\mathbf{A}_{i,:}$ and $\mathbf{b}_{i,:}$. 

Initially, $v_i$ maintains $\mathbf{A}_{i,:}$.
Then, $v_i$ transmits ${A}_{i,j}$ to $\mathcal{N}_i$ and receives ${A}_{j,i}$ from $\mathcal{N}_i$, then  it also knows the nonzero elements of the $i$th column $\mathbf{A}_{:,i}$, i.e., $\mathbf{A}^T_{i,:}$. Thus the nonzero elements of $\boldsymbol{\Omega}_{i,:}$ are calculated distributively as:
\begin{equation}
	\label{equ:ini_omega}
	{\Omega}_{ij}=\sum\nolimits_{k=1}^{n} {A}^T_{ik}A_{kj},\forall v_j\in\mathcal{N}_i. 
\end{equation}

Similarly, $\boldsymbol{\beta}_{i,:}$ can be calculated as:
\begin{equation}
	\label{equ:ini_beta}
	{\beta}_{ij}=\sum\nolimits_{k=1}^{n} {A}^T_{ik}b_{kj},  j\in\{1,\cdots,d\}. 
\end{equation}

Finally, the problem can be solved as $\mathbf{x}_i=DCG(\boldsymbol{\Omega}_{i,:}, \boldsymbol{\beta}_{i,:})$. Therefore $\mathbf{x}$ solves the least squares problem in (\ref{equ:least_square}).

\subsection{The Network Localization Problem}
The geographical locations of nodes are fundamental information for many multi-agent applications\cite{Nguyen2020,Sun2018,Wang2018}. Network localization techniques are usually adopted for calculating node locations in infrastructure-less scenarios\cite{Ping2020,PingHGO}, which are formulated as follows.
For a network of $m+n$ agents in $\mathbb{R}^d$, let $\mathcal{V}=\mathcal{A}\cup\mathcal{F}$ denote the entire node set.   Nodes in $\mathcal{A}=\{v_1,\cdots,v_m\}$ are called \emph{anchor agents}, whose locations $\mathbf{P}_{\mathcal{A}} = \{\mathbf{p}_1,\cdots,\mathbf{p}_m\}$ are known.  Nodes in $\mathcal{F}= \{v_{m+1},\cdots,v_{m+n}\}$ are called \emph{free agents}, whose locations $\mathbf{P}_{\mathcal{F}} = \{\mathbf{p}_{m+1},\cdots,\mathbf{p}_{m+n}\}$ are unknown. Each agent $v_i$ can only sense the relative distance $d_{ij}$ between $v_i$ and any neighbor $v_j\in\mathcal{N}_i$. Each agent can exchange its estimated location  $\hat{\mathbf{p}}_i$ and distance measurements with neighbors. The network localization problem is to design a distributed protocol for each agent to update its location $\hat{\mathbf{p}}_i$ so that it converges to $\mathbf{p}_i$. 

To solve the network localization problem, we transform it to a linear system. First, each location $\mathbf{p}_i$ is represented as a linear combination of locations of neighbors:
\begin{equation}
	\label{equ:cal_bary}
	\mathbf{p}_i = \sum_{v_j\in\mathcal{N}_i} a_{ij}\mathbf{p}_j,
\end{equation}  
where $a_{ij}$ are called \emph{barycentric coordinates}. The calculation of barycentric coordinates involve only local distance measurements and the specific process is introduced by Diao et al. \cite{ECHO} in $\mathbb{R}^2$ and by Han et al. \cite{Han2017} in $\mathbb{R}^3$. Then after calculating the barycentric coordinates for each node, the agent locations can form a linear system:
\begin{equation}
	\label{equ:linear_representation}
	\left[ {\begin{array}{*{20}{c}}
			{{{\mathbf{P}}_{\mathcal{A}}}} \\ 
			{{{{\mathbf{ P}}}_{\mathcal{F}}}} 
	\end{array}} \right] = \left[ {\begin{array}{*{20}{c}}
			{\mathbf{I}}&{\mathbf{0}} \\ 
			{\mathbf{B}}&{\mathbf{C}} 
	\end{array}} \right]\left[ {\begin{array}{*{20}{c}}
			{{{\mathbf{P}}_{\mathcal{A}}}} \\ 
			{{{{\mathbf{P}}}_{\mathcal{F}}}} 
	\end{array}} \right].
\end{equation}
$\mathbf{A}=\left[ {\begin{array}{*{20}{c}}
		{\mathbf{I}}&{\mathbf{0}} \\ 
		{\mathbf{B}}&{\mathbf{C}} 
\end{array}} \right] \in\mathbb{R}^{(m+n)\times(m+n)}$ is constructed with barycentric coordinates, i.e., the $i$th row of $\mathbf{A}$  is the barycentric coordinate of $v_i$ w.r.t. its neighbors $\mathcal{N}_i$. Then, the localization problem can be transformed into  solving the following linear system: 
\begin{equation}
	\label{equ:bll_problem}
	(\mathbf{I}-\mathbf{C})\mathbf{P}_{\mathcal{F}} = \mathbf{B}\mathbf{P}_{\mathcal{A}}.
\end{equation}
Writing $\mathbf{I}-\mathbf{C}$ as $\mathbf{M}$, (\ref{equ:bll_problem})  can be reformulated as:
\begin{equation}
	\label{equ:gbll_problem_reformulate}
	\mathbf{M}\mathbf{P}_{\mathcal{F}} = \mathbf{B}\mathbf{P}_{\mathcal{A}}.
\end{equation}
Considering that CG is used when the system matrix is positive definite \cite{Wolfgang2016}. Thus, we multiply $\mathbf{M}^{T}$ to both sides of (\ref{equ:gbll_problem_reformulate}):
\begin{equation}
	\label{equ:gbll_problem_positive}
	\mathbf{M}^{T}\mathbf{M}\mathbf{P}_{\mathcal{F}} = \mathbf{M}^{T}\mathbf{B}\mathbf{P}_{\mathcal{A}}.
\end{equation}
Then, $\mathbf{M}^{T}\mathbf{M}$ is positive definite. 		
To apply DCG to distributed localization, the  localization model in (\ref{equ:gbll_problem_positive}) is reformulated to $\boldsymbol{\Omega}\mathbf{P}_{\mathcal{F}}=\boldsymbol{\beta}$, where $\boldsymbol{\Omega}$ = $\mathbf{M}^{T}\mathbf{M}\in\mathbb{R}^{n\times n}$ and $\boldsymbol{\beta}$ = $\mathbf{M}^{T}\mathbf{B}\mathbf{P}_{\mathcal{A}}\in\mathbb{R}^{n\times d}$. 

Algorithm~\ref{alg:DCG_Loc} shows the routine of DCG-Loc. For initialization, $v_i$ needs to know the $i$th row of $\boldsymbol{\Omega}$ and $\boldsymbol{\beta}$ to invoke the general DCG. 
After constructing the local linear model by (\ref{equ:cal_bary}), $v_i$ maintains $\mathbf{A}_{i,:}$, so it knows the nonzero elements of the $i$th row $\mathbf{M}_{i,:}$ by ${M}_{i,j}=-A_{i,j}$ if $v_j\in\mathcal{N}_i$. 
Then, $v_i$ transmits ${M}_{ij}$ to $\mathcal{N}_i$ and receives ${M}_{ji}$ from $\mathcal{N}_i$ (Line~\ref{dcg:send_M}-\ref{dcg:receive_M}), so it also knows the nonzero elements of the $i$th column $\mathbf{M}_{:,i}$, i.e., $\mathbf{M}^T_{i,:}$. Thus the nonzero elements of $\boldsymbol{\Omega}_{i,:}$ are calculated distributively as:
\begin{equation}
	\label{equ:dcg_ini_omega}
	\boldsymbol{\Omega}_{ij}=\sum\nolimits_{k=1}^{n} {M}^T_{ik}M_{kj},\forall v_j\in\mathcal{N}_i. 
\end{equation}
To calculate $\boldsymbol{\beta}_{(i,:)}$, an intermediate vector $\boldsymbol{\mu}\in\mathbb{R}^{n\times d}$ implying $\mathbf{B}\mathbf{P}_{\mathcal{A}}$ is introduced. $v_i$ locally calculates the $i$th row of $\boldsymbol{\mu}$: 
\begin{equation}
	\label{equ:dcg_ini_mu}
	\boldsymbol{\mu}_{i,:}=\sum\nolimits_{v_a\in \mathcal{N}_i\cap\mathcal{A}} A_{ia}\mathbf{p}_a,
\end{equation}
where $\mathcal{N}_i\cap\mathcal{A}$ represents neighboring anchors. 
$\boldsymbol{\mu}_{i,:}$ = $[0,0]$ if no anchor is found in $\mathcal{N}_i$. Then, $v_i$ sends $\boldsymbol{\mu}_{i,:}$ to $\mathcal{N}_i$ and receives $\boldsymbol{\mu}_{j,:}$  of each $v_j\in\mathcal{N}_i^*$ (Line~\ref{dcg:send_M}-\ref{dcg:receive_M}). Thus, $\boldsymbol{\beta}_{i,:}$ is calculated as:

\begin{equation}
	\label{equ:dcg_ini_beta}
	\boldsymbol{\beta}_{i,:} = \sum\nolimits_{v_j\in \mathcal{N}_i\cap\mathcal{F}}M^T_{i,j}\boldsymbol{\mu}^{}_{j,:},
\end{equation} 
where $\mathcal{N}_i\cap\mathcal{F}$ means the non-anchor barycentric neighbors. 		
For convenience, each location $\mathbf{\hat{p}}_i\in\mathbb{R}^{d\times 1}$ is decomposed to $[\hat{p}^{1}_i,\cdots,\hat{p}^{d}_i]$.  $\boldsymbol{\beta}_{(i,:)}$ is decomposed to $[\beta_i^{1},\cdots,\beta_i^{d}]$. The element $\hat{p}^j_i$ is calculated by \textbf{DCG}($\boldsymbol{\mathbf{A}}_{i,:}$, $\beta_i^j$) (Line~\ref{dcg:ret_loc}). Therefore, from the GBLL model in (\ref{equ:gbll_problem_positive}), $\mathbf{\hat{p}}_i$ is calculated leveraging DCG-Loc, where communications only involve message passing with neighbors.


	\begin{algorithm}[h]
		\caption{\textbf{Distributed Conjugate Gradient Localization (DCG-Loc) of $v_i$}}
		\label{alg:DCG_Loc} 
		\KwIn{neighbors: $\mathcal{N}_i$;  barycentric coordinates: $\mathbf{A}_{i,:}$\;} 
		\KwOut{location: $\mathbf{\hat p}_i$\;}
		\lnl{dcg:init_M} $M_{ij}\gets-A_{ij}$; calculate $\boldsymbol{\mu}_{i,:}$ as (\ref{equ:dcg_ini_mu})\; 
		\lnl{dcg:send_M} transmit $M_{ij}$ and $\boldsymbol{\mu}_{i,:}$ to $v_j\in\mathcal{N}_i$\;
		\lnl{dcg:receive_M} receive $M_{ji}$ and $\boldsymbol{\mu}_{j,:}$ from $v_j\in\mathcal{N}_i$\; 
		\lnl{dcg:init_omega} calculate $\boldsymbol{\Omega}_{i,:}$ as (\ref{equ:dcg_ini_omega}), calculate $\boldsymbol{\beta}_{i,:}$ as (\ref{equ:dcg_ini_beta})\;
		\lnl{dcg:ret_loc}\KwRet $\mathbf{\hat{p}}_i\gets$ [\textbf{DCG}($\boldsymbol{\Omega}_{i,:}$, $\beta_i^1$),$\cdots$, \textbf{DCG}($\boldsymbol{\Omega}_{i,:}$, $\beta_i^d$)].
		
	\end{algorithm}

\section{Evaluation}
\label{sect:evaluation}

In this section, we evaluate the convergence speed between our proposed DCG algorithm and the representative Richardson iteration by counting the iteration rounds. Simulations are conducted using MATLAB R2020b in both $\mathbb{R}^2$ and $\mathbb{R}^3$. A network denoted by $\mathcal{G=\{V,E\}}$ is deployed. The number of anchors is set to $d+1$, which is the minimum number of anchors required to uniquely localize a network in $\mathbb{R}^d$.

Fig. \ref{fig:show_iter_2d} and Fig. \ref{fig:effectiveness_dcg_2d} shows the results in $\mathbb{R}^2$. DCG  and Richardson iteration are adopted to solve the linear
localization problem modeled by the network in Fig. \ref{fig:show_iter_2d}. It is shown that both DCG  and Richardson successfully converge to the ground truth within finite rounds of iterations. However, Fig. \ref{fig:effectiveness_dcg_2d} shows that the Richardson iteration consumes $5\times10^{5}$ rounds while DCG only needs 550 rounds. 		 
From Fig. \ref{fig:show_iter_3d} and Fig. \ref{fig:effectiveness_dcg_3d}, similar results can be obtained in $\mathbb{R}^3$. Overall, DCG-Loc is shown to be faster than Richardson iteration
about 1,000 times.

\begin{figure}
	\centering
	\subfigure[The Richardson Iteration]{
		\begin{minipage}[t]{0.46\linewidth}
			\centering
			\includegraphics[width=1\textwidth]{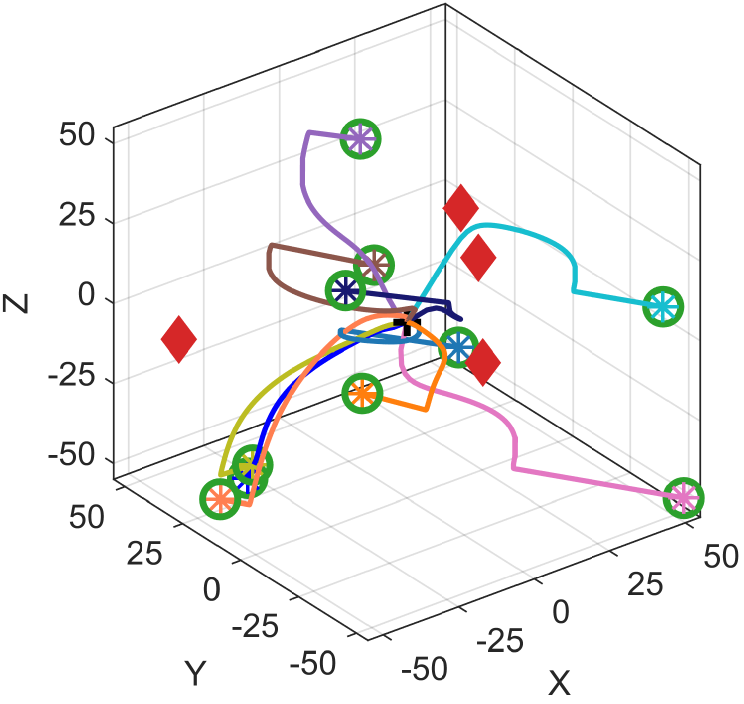}
			\label{fig:trace_richardson_3d}
		\end{minipage}
	}
	\vspace{-0.2cm}
	\subfigure[DCG]{
		\begin{minipage}[t]{0.46\linewidth}
			\centering
			\includegraphics[width=1\textwidth]{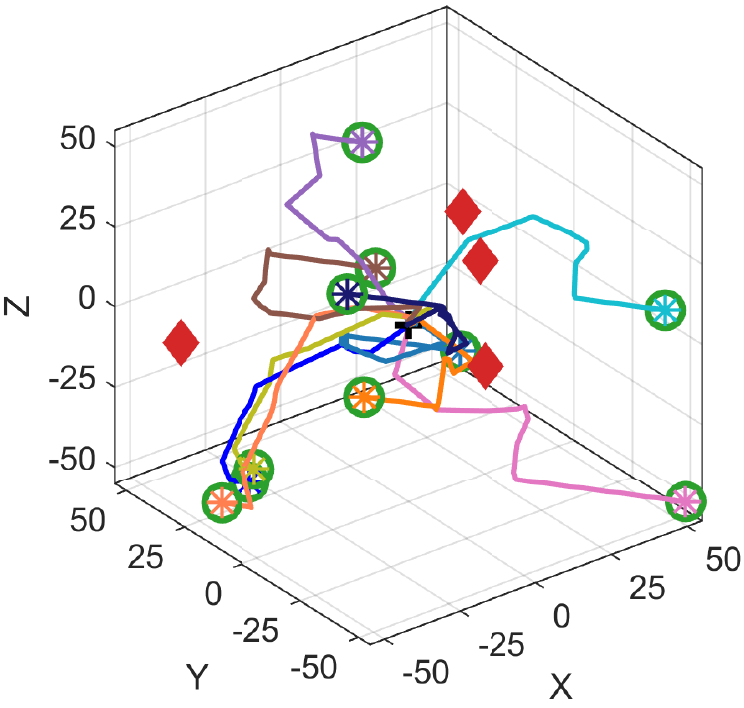}
			\label{fig:trace_dcg_3d}
		\end{minipage}
	}
	\caption{The convergence trails in $\mathbb{R}^3$. }
	\label{fig:show_iter_3d}
\end{figure}

\begin{figure}		
	\centering	
	\includegraphics[width=.99\linewidth]{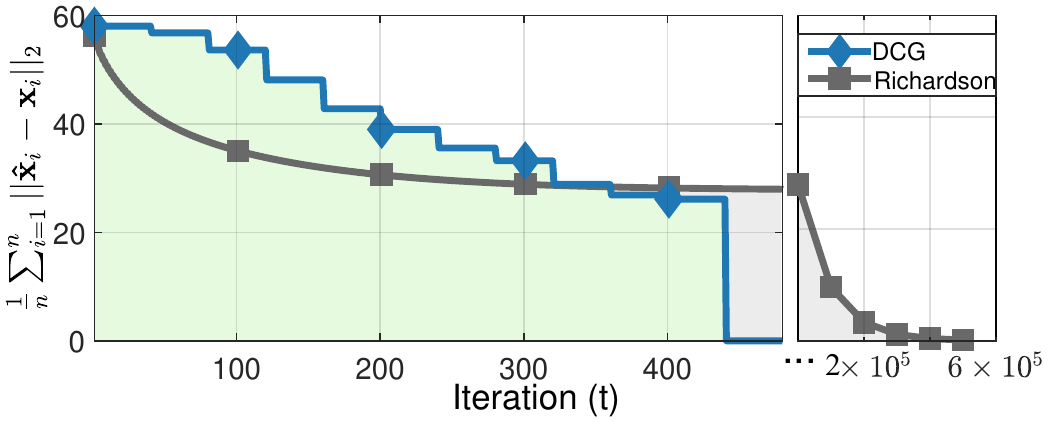}
	\caption{The mean square error w.r.t. iteration rounds.}
	\label{fig:effectiveness_dcg_3d}
\end{figure}

\section{Conclusion}
\label{sect:conclusion}
In this paper, we proposed DCG to enable a network of $n$ agents to solve linear equations like $\mathbf{AX=b}$ in fixed rounds of iterations. The DCG algorithm is presented with property analysis and two applications. Compared with traditional Richardson iteration, DCG shows 3 magnitudes faster convergence speed. In future work, we will consider reducing the communication burden that DCG requires.

\bibliographystyle{unsrt}
\bibliography{main}

\end{document}